\newtheorem{dfn}{Definition}
\newtheorem{thm}{Theorem}[section]
\title{Dissipative dynamics of non-interacting fermion systems\\ and conductivity}
\author{Kazuki Yamaga\thanks{Department of Nuclear Engineering, Kyoto University, yamaga.kazuki.62a@st.kyoto-u.ac.jp}}
\date{}
\begin{document}
\maketitle
\begin{abstract}
In this paper, Non-Equilibrium Steady State induced by electric field and the conductivity of non-interacting fermion systems under the dissipative dynamics is discussed. The dissipation is taken into account within a framework of the quantum dynamical semigroup introduced in  \cite{davies1977irreversible}. We obtain a formula of the conductivity for the stationary state, which is applicable to arbitrary potentials. Our formula gives a justification of an adiabatic factor which is often introduced in practical calculation using Kubo formula. In addition the conductivity of crystals (i.e. periodic potentials) is also discussed.
\end{abstract}
\section{Introduction}
Given a macroscopic quantum system, its thermodynamical properties such as energy, heat capacity and magnetization are obtained from the microscopic information about the Hamiltonian of the system by considering the Gibbs state. This theory is summarized as equilibrium quantum statistical mechanics. Although there are no unified theory for general non-equilibrium systems, in linear response regime there is a formula for response functions to perturbations, Kubo formula \cite{kubo1957}. An important application of linear response theory is the electric conductivity. Being applied the electric field, the system in a stable equilibrium state is supposed to settle in another stable steady state with non-vanishing electric current (non-equilibrium steady state). In fact, the electric conductivity is discussed in \cite{kubo1957} based on the above observation. However, from the rigorous point of view, there is a subtle point in this discussion as explained below:

If the Kubo formula (equation (5.12) in \cite{kubo1957}) is naively applied to the simplest case, independently moving electrons, then the electric conductivity $\sigma$ is infinite,
\[ \sigma=\infty. \]
The reason is understood as follows. The formula of linear response theory is derived from the Hamiltonian dynamics of the closed system. Since the system is closed, the velocity of  the electron which is accelerated by the electric field is monotonically increasing and goes to infinite in the long time limit. Thus the electric conductivity becomes infinite finally. The issue comes from the absence of the non-equilibrium steady state (NESS) in this dynamics. In fact, a small adiabatic factor is often introduced in the practical calculation without any physical justification. 

One may claim that this difficulty is due to the idealistic nature of the free electron system. In a realistic circumstance, systems are not free from their environments, and therefore the dynamics must be dissipative. The aim of this paper is to see how the Kubo formula is modified by considering the dissipative dynamics. We consider the non-interacting fermion systems and quantum dynamical semigroup introduced in \cite{davies1977irreversible}.

This paper is organized as follows. First, in section 2 we explain the quantum dynamical semigroup of Davies \cite{davies1977irreversible} and derive the formula of conductivity in the lattice systems. The formula is almost the same as the Kubo formula with an adiabatic facror. But the parameter here represents the strengthen of the dissipation. Thus our formula can be regarded as a physical justification of the adiabatic factor. In the models that the potential is $0$ ($V=0$) the current is obtained concretely. This is the topic of section 3. Finally in section 4 the conductivity of crystals (that is the system with periodic potential) is discussed. It turns out that in the low temperature and small dissipation limit, the conductivity is given as the integral of the velocity over the Fermi surface. In appendix, we will treat the free continuous model and show that the Drude formula is obtained.

\section{Non-interacting fermion systems and Kubo formula}
We consider non-interacting many body systems of fermionic particles on $d$-dimensional lattice $\mathbb{Z}^d$. One particle is described by the Hilbert space $l^2(\mathbb{Z}^d)$ and the Hamiltonian
\[ (h\phi)(x)=-\sum_{|x-y|=1}\phi(y)+V(x)\phi(x),\hspace{5pt}\phi\in l^2(\mathbb{Z}^d). \]
$V$ is a real valued function on $\mathbb{Z}^d$ called potential, and we assume $V$ is bounded. Then $h$ is a bounded self adjoint operator. For a finite subset $\Lambda_N=[-N,N]^d\cap\mathbb{Z}^d$\ ($N\in\mathbb{N}$), denote the corresponding projection by $P_N$. That is, $P_N\colon l^2(\mathbb{Z}^d)\to l^2(\mathbb{Z}^d)$ is defined as
\[ (P_N\phi)(x)=
\begin{cases}
\phi(x) & x\in\Lambda_N \\
0 & \mathrm{otherwise}.
\end{cases}
\]
Define $h_N=P_NhP_N$. Since $h_N$ is a finite-rank self-adjoint operator it is decomposed as follows:
\[ h_N=\sum_{n=1}^{|\Lambda_N|}\epsilon_n^{(N)}|\phi_n^{(N)}\rangle\langle\phi_n^{(N)}| \]
(here we used the Dirac notation of physics). $\epsilon_n^{(N)}$ and $\phi_n^{(N)}$ are eigenvalues and the corresponding eigenvectors with norm $1$ and orthogonal each other. When the eigenvalue is degenerated, this decomposition is not unique. But the way of decomposition dose not matter in the following discussion. 

Now we consider many particle systems on $\mathbb{Z}^d$. The many body system is described by the CAR algebra $\mathcal{A}^{\mathrm{CAR}}(l^2(\mathbb{Z}^d))$ (we denote simply $\mathcal{A}$  below) generated by the creation / annihilation operators $a^*(f)\ /\  a(f)$ satisfying the following canonical anti-commutation relations:
\[ \{a^*(f),a(g)\}=\langle g,f\rangle1,\hspace{10pt}f,g\in l^2(\mathbb{Z}^d) \]
\[ \{a(f),a(g)\}=0, \]
where $\{A,B\}=AB+BA$ and $\langle\cdot,\cdot\rangle$ is the inner product of $l^2(\mathbb{Z}^d)$. Suppose the particles are moving independently by the Hamiltonian $h$, then the automorphism of the non-interacting dynamics on the many body system $\alpha_t\colon\mathcal{A}\to\mathcal{A}$ (Heisenberg picture) is given as
\[ \alpha_t(a^{\#}(f_1)\cdots a^{\#}(f_n))=a^{\#}(e^{ith}f_1)\cdots a^{\#}(e^{ith}f_n), \]
where $a^{\#}$ stands for $a$ or $a^*$.

A state is given as a normalized positive linear functional on the algebra $\mathcal{A}$. An important class of states in the fermion systems is quasi-free state. A state $\omega$ is called a quasi-free state if the following conditions are satisfied: For any $n,m\in\mathbb{N},f_1,\cdots,f_n,g_1,\cdots,g_m\in l^2(\mathbb{Z}^d)$,
\[ \omega(a^{*}(f_n)\cdots a^{*}(f_1)a(g_1)\cdots a(g_m))=\delta_{nm}\mathrm{det}((\omega(a^*(f_i)a(g_j)))_{ij}). \]
The property of quasi-free state is completely determined by its two-point functions. Since the two-point function is always expressed by a positive operator $0\le R\le1$ on $l^2(\mathbb{Z}^d)$ as
\[ \omega(a^*(f)a(g))=\langle g,Rf\rangle, \]
the quasi-free state is completely decided by an operator $R$. Conversely, given an operator $R$ on $l^2(\mathbb{Z}^d)$ that satisfies $0\le R\le1$, we have a quasi-free state by the above relations. For example, an equilibrium state $\omega_{\beta,\mu}$ (KMS state) at inverse temperature $\beta$ and chemical potential $\mu$ of a non-interacting fermion system with one-particle Hamiltonian $h$ is the quasi-free state corresponding to an operator $f_{\beta,\mu}(h)$, where $f_{\beta,\mu}$ is the Fermi-Dirac distribution $f_{\beta,\mu}(\epsilon)=\frac{1}{1+e^{\beta(\epsilon-\mu)}}$.

Next we consider dissipative dynamics on $\mathcal{A}$. We use the quantum dynamical semigroup discussed in \cite{davies1977irreversible}. 

First let us recall the definition of the quantum dynamical semigroup in general settings. A quantum dynamical semigroup is a family of state transformations with the semigroup property.

\begin{dfn}
Let $\mathcal{A}$ be a unital C*-algebra. A family of linear maps on $\mathcal{A}$, $\{T_t\}_{t\ge0}$, is called a quantum dynamical semigroup if it satisfies the following conditions:
\begin{description}
\item[(1)] $T_t$ is a unital CP $\mathrm{(}$completely positive$\mathrm{)}$ map for every $t\ge0$.
\item[(2)] $T_0=\mathrm{id}$  $\mathrm{(}\mathrm{id}$ is an identity map on $\mathcal{A}\mathrm{)}$.
\item[(3)] $T_{t+s}=T_t\circ T_s$\ $\mathrm{(}t,s\ge0\mathrm{)}$.
\item[(4)] $\displaystyle\lim_{t\downarrow0}\|T_t(A)-A\|=0,\hspace{5pt}A\in\mathcal{A}$.
\end{description}
\end{dfn}
In (1), unital means the relation $T_t(I)=I$ and complete positivity is defined as follows; a linear map $T\colon\mathcal{A}\to\mathcal{A}$ is completely positive if for any $N\in\mathbb{N}$, $T\otimes\mathrm{id}_N\colon\mathcal{A}\otimes M(N,\mathbb{C})\to\mathcal{A}\otimes M(N,\mathbb{C})$ is a positive map, where $M(N,\mathbb{C})$ is the algebra of $N\times N$ complex matrices. (1) is the condition that each $T_t$ is a state transformation. (2) implies that at time $0$ the system dose not change. (3) is the semigroup property. Here we contain the strong continuity (4) in the definition.

Quantum dynamical semigroup is one of tools describing the dynamics of open quantum systems. This is an approximation because in general the dynamics of open systems does not have the semigroup property (3). If the system is coupled to a large thermal reservoir with small relaxation time, this approximation is expected to work well. Indeed it is shown that quantum dynamical semigroup is obtained in the weak coupling limit in the physically natural settings \cite{davies1974markovian}. 

One of the most important features of quantum dynamical semigroup is that it is determined by its generator $L$ defined as
\[ L(A)=\lim_{t\downarrow0}\frac{T_t(A)-A}{t} . \]
This is defined only for $A\in\mathcal{A}$ such that the limit (the norm limit) of the right hand side exits. The set of such elements $\mathcal{D}(L)$ (the domain of $L$) forms a dense subspace of $\mathcal{A}$.

Now we turn to the fermion systems on the lattice $\mathbb{Z}^d$ and introduce a model. Here we want to deal with the system coupled to a heat reservoir. The dynamics must drive the states to an equilibrium state determined by the reservoir. As the system itself is an infinite system, it is not a trivial task to construct such a model. One of the possible models was given by Davies. Davies considered in \cite{davies1977irreversible} the dissipative  dynamics on many body fermion systems which is described by the language of one-particle Hilbert space. Let $\delta$ be the generator of the non-interacting dynamics $\alpha_t$. Define a *-automorphism $\theta\colon\mathcal{A}\to\mathcal{A}$ by $\theta(a^{\#}(f))=-a^{\#}(f)$. For each $N\in\mathbb{N}$, define a linear map $L_N\colon\mathcal{A}\to\mathcal{A}$ by $\mathcal{D}(L_N)=\mathcal{D}(\delta)$ and 
\begin{eqnarray*}
L_N(A)&&=\delta(A) \\
&&+\lambda\sum_{n=1}^{|\Lambda_N|}f_{\beta,\mu}(\epsilon_n^{(N)})\left(2a(\phi_n^{(N)})\theta(A)a^*(\phi_n^{(N)})-a(\phi_n^{(N)})a^*(\phi_n^{(N)})A-Aa(\phi_n^{(N)})a^*(\phi_n^{(N)})\right) \\
&&+\lambda\sum_{n=1}^{|\Lambda_N|}\left(1-f_{\beta,\mu}(\epsilon_n^{(N)})\right)\left(2a^*(\phi_n^{(N)})\theta(A)a(\phi_n^{(N)})-a^*(\phi_n^{(N)})a(\phi_n^{(N)})A-Aa^*(\phi_n^{(N)})a(\phi_n^{(N)})\right),
\end{eqnarray*}
for $A\in\mathcal{D}(L_N)$. $\lambda$ is a positive real number representing the strengthen of dissipation. The first term of the right hand side corresponds to the Hamiltonian dynamics, the second term represents the creation of particles in the eigenstate of energy by the distribution $f_{\beta,\mu}$ and the third term means the annihilation of particles by the distribution $(1-f_{\beta,\mu})$.

This map $L_N$ generates a quantum dynamical semigroup, we denote it $\{T_t^N\}_{t\ge0}$. The limit $\displaystyle\lim_{N\to\infty}T_t^N(A)$ exits for each $t\ge0,A\in\mathcal{A}$ and define a quantum dynamical semigroup $T_t(A)=\displaystyle\lim_{N\to\infty}T_t^N(A)$.

It is shown that every state $\psi$ converges to  $\omega_{\beta,\mu}$, the equilibrium state at inverse temperature $\beta$ and chemical potential $\mu$, in the weak* topology by the dynamics $T_t$:
\[ \lim_{t\to\infty}\psi\circ T_t(A)=\omega_{\beta,\mu}(A),\hspace{5pt}A\in\mathcal{A} .\]
Thus, $T_t$ is considered to describe the dynamics of the system coupled to a thermal reservoir. For the detail of this dynamics, see \cite{davies1977irreversible}.

Now suppose the system is initially in the stable equilibrium state $\omega_{\beta,\mu}$ under the dynamics $T_t$ and at time $t=0$ applied the uniform electric field $E$ in the direction $e_1=(1,0,\cdots,0)\in\mathbb{Z}^d$. Then one-particle Hamiltonian is changed to $h_E=h-EQ_1$, where $Q_1$ is the position operator; The domain is $\mathcal{D}(Q_1)=\displaystyle\left\{\phi\in l^2(\mathbb{Z}^d)\left| \sum_{x\in\mathbb{Z}^d}|x_1|^2|\phi(x)|^2\right.\right\}$ and for $\phi\in\mathcal{D}(Q_1)$, $Q_1$ acts as
\[ (Q_1\phi)(x)=x_1\phi(x) \]
where $x_1$ is the first element of $x\in\mathbb{Z}^d$. $Q_1$ is an unbounded self adjoint operator. Thus $h_E$ is also an unbounded self adjoint operator with the domain $\mathcal{D}(h_E)=\mathcal{D}(Q_1)$. Let $\delta_E$ be the generator of the non-interacting dynamics on $\mathcal{A}$ with the one-particle Hamiltonian $h_E$, and define a quantum dynamical semigroup $T_t^{E,N}$ by replacing $\delta$ in the definition of $L_N$ to $\delta_E$. These maps also define a quantum dynamical semigroup $T_t^E$ in the limit $N\to\infty$. This dynamics maps a quasi-free state to another quasi-free state. Let $\phi$ be a quasi-free state with the operator $R$, then $\phi\circ T_t^E$ is also a quasi-free state and the corresponding operator is
\[ e^{-2\lambda t}e^{-ith_E}Re^{ith_E}+2\lambda\int^t_0e^{-2\lambda s}e^{-ish_E}f_{\beta,\mu}(h)e^{ish_E}ds. \]
Thus the system initially in the equilibrium state $\omega_{\beta,\mu}$ finally converges to the stable non-equilibrium steady state by the dynamics $T_t^E$ in the weak* topology (in fact the limit does not depend on the initial state). We denote this state by $\omega_{\beta,\mu}^{\lambda,E}$ (apart from the case that $E=0$, this limit depends on $\lambda$). $\omega_{\beta,\mu}^{\lambda,E}$ is a quasi-free state corresponding to the operator
\[ 2\lambda\int^\infty_0e^{-2\lambda s}e^{-ish_E}f_{\beta,\mu}(h)e^{ish_E}ds. \]

Now let us consider the electric current. The increase of the number of particles at site $x\in\mathbb{Z}^d$ per unit time is expressed as
\[ \delta(a_x^*a_x)=\sum_{|x-y|=1}i(a^*_xa_y-a^*_ya_x) .\]
Here we write simply $a^{\#}(\eta_x)$ as $a^{\#}_x$, where $\{\eta_x\}_{x\in\mathbb{Z}^d}$ is the standard basis of $l^2(\mathbb{Z}^d)$: $\eta_x(y)=\delta_{xy}$. Each term in the summation represents the number of particles flowing from the nearest neighbor site of $x\in\mathbb{Z}^d$ per unit time. Define the observable $\hat{j}_{1,x}$ representing the current at $x$ in the direction $e_1$ as the average of the current flowing form $x-e_1$ to $x$ and the current from $x$ to $x+e_1$;
\[ \hat{j}_{1,x}=\frac{i}{2}(a^*_xa_{x+e_1}-a^*_{x+e_1}a_x+a^*_{x-e_1}a_x-a^*_xa_{x+e_1}) \]

The current in the non-equilibrium steady state $\omega^{\lambda,E}_{\beta,\mu}$ is 
\[ j^\lambda_{1,x}(E;\beta,\mu)\equiv\omega_{\beta,\mu}^{\lambda,E}(\hat{j}_{1,x}) .\]
The following theorem, one of our main results, shows that the response of the current $j_{1,x}^\lambda(E;\beta,\mu)$ to the electric field $E$ is simply expressed by the formula using the information of the one-particle Hamiltonian.
\begin{thm}
The current $j_{1,x}^\lambda(E;\beta,\mu)$ is differentiable at $E=0$ and its derivative $\sigma_{1,x}^\lambda(\beta,\mu)$ (electric conductivity) is expressed as
\[ \sigma_{1,x}^\lambda(\beta,\mu)=\mathrm{Re}\int_0^\infty e^{-2\lambda s}\langle\eta_x,e^{-ish}i[Q_1,f_{\beta,\mu}(h)]e^{ish}v_1\eta_x\rangle ds, \]
where $\mathrm{Re}$ is the real part and $v_1=i[h,Q_1]$ is the velocity ($v_1$ is independent of the potential $V$).
\end{thm}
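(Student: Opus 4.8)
The plan is to reduce the whole statement to understanding the first-order behaviour in $E$ of the one-particle operator $R_E:=2\lambda\int_0^\infty e^{-2\lambda s}e^{-ish_E}f_{\beta,\mu}(h)e^{ish_E}\,ds$ attached to the stationary state $\omega_{\beta,\mu}^{\lambda,E}$. First I would rewrite the current as a linear functional of this operator. Since $\omega_{\beta,\mu}^{\lambda,E}$ is the quasi-free state with two-point function $\omega(a^*(f)a(g))=\langle g,R_Ef\rangle$, substituting the definition of $\hat j_{1,x}$ and using the explicit matrix elements of the velocity operator ($\langle\eta_x,v_1\eta_{x-e_1}\rangle=i$, $\langle\eta_x,v_1\eta_{x+e_1}\rangle=-i$, all others zero), a direct computation gives $j_{1,x}^\lambda(E;\beta,\mu)=\mathrm{Re}\,\langle\eta_x,v_1R_E\eta_x\rangle$. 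Note $v_1=i[h-V,Q_1]$ is a bounded finite-difference operator independent of $V$, and $R_E$ is bounded, so the right-hand side is well defined for every $E$.

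The heart of the proof is an exact expansion of $R_E$ in $E$. Because $h$ commutes with $f_{\beta,\mu}(h)$ one has, formally, $[h_E,f_{\beta,\mu}(h)]=[h-EQ_1,f_{\beta,\mu}(h)]=-E\,[Q_1,f_{\beta,\mu}(h)]$; the point is that $[Q_1,f_{\beta,\mu}(h)]$ extends to a bounded operator, which I would get from the exponential off-diagonal decay of the kernel of $f_{\beta,\mu}(h)$ — a Combes–Thomas estimate, using that $f_{\beta,\mu}$ is analytic in a strip around the spectrum of the bounded, finite-range operator $h$ — and this also shows that $f_{\beta,\mu}(h)$ leaves $\mathcal D(Q_1)=\mathcal D(h_E)$ invariant, so that the commutator identity holds as an identity of operators. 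Differentiating $r\mapsto e^{-irh_E}f_{\beta,\mu}(h)e^{irh_E}$, whose derivative is $e^{-irh_E}(-i[h_E,f_{\beta,\mu}(h)])e^{irh_E}=E\,e^{-irh_E}i[Q_1,f_{\beta,\mu}(h)]e^{irh_E}$, and integrating in $r$ gives $e^{-ish_E}f_{\beta,\mu}(h)e^{ish_E}=f_{\beta,\mu}(h)+E\int_0^s e^{-irh_E}i[Q_1,f_{\beta,\mu}(h)]e^{irh_E}\,dr$. Inserting this into $R_E$, interchanging the $s$- and $r$-integrations by Fubini (the integrand is norm-bounded by $e^{-2\lambda s}\|[Q_1,f_{\beta,\mu}(h)]\|$), and using $2\lambda\int_r^\infty e^{-2\lambda s}\,ds=e^{-2\lambda r}$ together with $2\lambda\int_0^\infty e^{-2\lambda s}\,ds=1$, I obtain the closed form $R_E=f_{\beta,\mu}(h)+E\int_0^\infty e^{-2\lambda s}e^{-ish_E}i[Q_1,f_{\beta,\mu}(h)]e^{ish_E}\,ds$.

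Combining the two steps, $j_{1,x}^\lambda(E;\beta,\mu)=j_{1,x}^\lambda(0;\beta,\mu)+E\,G(E)$ with $G(E):=\mathrm{Re}\int_0^\infty e^{-2\lambda s}\langle\eta_x,v_1e^{-ish_E}i[Q_1,f_{\beta,\mu}(h)]e^{ish_E}\eta_x\rangle\,ds$, and in particular $j_{1,x}^\lambda(0;\beta,\mu)=\mathrm{Re}\langle\eta_x,v_1f_{\beta,\mu}(h)\eta_x\rangle$ is finite. The difference quotient at $E=0$ equals $G(E)$ exactly, so differentiability at $E=0$ and the value of the derivative reduce to the continuity of $G$ at $E=0$: the integrand is bounded in modulus by $\|v_1\|\,\|[Q_1,f_{\beta,\mu}(h)]\|$ uniformly in $s$ and $E$, so dominated convergence applies with the integrable weight $e^{-2\lambda s}$, and $E\mapsto e^{\pm ish_E}\eta_x$ is strongly continuous at $E=0$ because $h_E\to h$ in the strong resolvent sense as $E\to0$ ($\mathcal D(Q_1)$ being a common core). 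Hence $\sigma_{1,x}^\lambda(\beta,\mu)=G(0)=\mathrm{Re}\int_0^\infty e^{-2\lambda s}\langle\eta_x,v_1e^{-ish}i[Q_1,f_{\beta,\mu}(h)]e^{ish}\eta_x\rangle\,ds$. To put this in the stated form, take complex conjugates: since $v_1$, $f_{\beta,\mu}(h)$, and hence $i[Q_1,f_{\beta,\mu}(h)]$ are self-adjoint, $\overline{\langle\eta_x,v_1e^{-ish}i[Q_1,f_{\beta,\mu}(h)]e^{ish}\eta_x\rangle}=\langle\eta_x,e^{-ish}i[Q_1,f_{\beta,\mu}(h)]e^{ish}v_1\eta_x\rangle$, so the two integrands have the same real part and the asserted formula follows.

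The main obstacle is the unboundedness of $Q_1$: the Duhamel-type identity for $e^{-ish_E}f_{\beta,\mu}(h)e^{ish_E}$, the invariance of $\mathcal D(Q_1)$ under $f_{\beta,\mu}(h)$, and the interchange of the derivative with the $s$-integral all have to be justified in its presence. Everything rests on the single structural fact that $[Q_1,f_{\beta,\mu}(h)]$ is a bounded operator; once that is secured through the kernel-decay estimate, the remaining steps are routine Fubini and dominated-convergence arguments against the weight $e^{-2\lambda s}$.
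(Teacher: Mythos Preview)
Your proof is correct and follows essentially the same route as the paper: both extract the exact first-order term in $E$ by differentiating $s\mapsto e^{-ish_E}f_{\beta,\mu}(h)e^{ish_E}$ (your Duhamel identity plus Fubini is precisely the paper's integration by parts in the Laplace integral), both rely on the boundedness of $[Q_1,f_{\beta,\mu}(h)]$ via exponential kernel decay, and both pass to $E=0$ by strong convergence of $e^{ish_E}$ together with dominated convergence against $e^{-2\lambda s}$. The only cosmetic difference is that you work at the operator level with $R_E$ while the paper stays at the level of individual matrix elements $\omega_{\beta,\mu}^{\lambda,E}(a_x^*a_y)$.
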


\begin{proof}
First, let us calculate the two point functions $\omega_{\beta,\mu}^{\lambda,E}(a^*_xa_y)$.
\[ \omega_{\beta,\mu}^{\lambda,E}(a^*_xa_y)=2\lambda\int^\infty_0e^{-2\lambda s}\langle e^{ish_E}\eta_y,f_{\beta,\mu}(h)e^{ish_E}\eta_x\rangle ds. \]
Since $\eta_x,\eta_y\in\mathcal{D}(Q_1)=\mathcal{D}(h_E)$, $\langle e^{ish_E}\eta_y,f_{\beta,\mu}(h)e^{ish_E}\eta_x\rangle$ is differentiable by $s$. Integrating by parts we get
\begin{eqnarray*}
\omega_{\beta,\mu}^{\lambda,E}(a^*_xa_y)&=&\langle\eta_y,f_{\beta,\mu}(h)\eta_x\rangle \\
&&+\int^\infty_0e^{-2\lambda s}\left(\langle ih_Ee^{ish_E}\eta_y,f_{\beta,\mu}(h)e^{ish_E}\eta_x\rangle+\langle e^{ish_E}\eta_y,f_{\beta,\mu}(h)ih_Ee^{ish_E}\eta_x\rangle\right)ds \\
&=&\omega_{\beta,\mu}(a^*_xa_y)\\
&&+iE\int^\infty_0e^{-2\lambda s}\left(\langle Q_1e^{ish_E}\eta_y,f_{\beta,\mu}(h)e^{ish_E}\eta_x\rangle-\langle e^{ish_E}\eta_y,f_{\beta,\mu}(h)Q_1e^{ish_E}\eta_x\rangle \right)ds.
\end{eqnarray*}
We have $e^{ish}\eta_z\in\mathcal{D}(Q_1)$\ ($z\in\mathbb{Z}^d$) and for any $\phi\in l^2(\mathbb{Z}^d)$, $\displaystyle\lim_{E\to0}e^{ish_E}\phi=e^{ish}\phi$. In addition, since $|\langle\eta_x,f_{\beta,\mu}(h)\eta_y\rangle|$ decays exponentially with respect to $|x-y|$ \cite{aizenman1998localization}, for any $\phi\in\mathcal{D}(Q_1)$, $f_{\beta,\mu}(h)\phi\in\mathcal{D}(Q_1)$ and $[Q_1,f_{\beta,\mu}(h)]\equiv Q_1f_{\beta,\mu}(h)-f_{\beta,\mu}(h)Q_1$ can be extended to a bounded operator. Thus
\[ \lim_{E\to0} \langle\eta_y,e^{-ish_E}[Q_1,f_{\beta,\mu}(h)]e^{ish_E}\eta_x\rangle=\langle\eta_y,e^{-ish}[Q_1,f_{\beta,\mu}(h)]e^{ish}\eta_x \rangle .\]
In addition, one can exchange the limit and the integral and obtain
\[ \lim_{E\to0}\frac{\omega_{\beta,\mu}^{\lambda,E}(a^*_xa_y)-\omega_{\beta,\mu}(a^*_xa_y)}{E}=\int^\infty_0e^{-2\lambda s}\langle\eta_y,e^{-ish}i[Q_1,f_{\beta,\mu}(h)]e^{ish}\eta_x\rangle ds. \]
The relation 
\[ v_l\eta_z=i[h,Q_1]\eta_z=i\eta_{z+e_1}-i\eta_{z-e_1} \]
and the definition of $\hat{j}_{1,x}$ concludes the formula.
\end{proof}
Here we would like to mention the relation between the original Kubo formula and our formula. According to the paper \cite{kubo1957}, when the perturbation $fA$ ($|f|\ll1$) is added to the Hamiltonian $H$, the change of the first order $\Delta B$ of quantity $B$ in NESS is 
\[ \Delta B=\frac{1}{i}\int^\infty_0\mathrm{Tr}[A,\rho_{\beta,\mu}]B(t)dt=i\int_0^\infty\mathrm{Tr}\rho_{\beta,\mu}[A,B(t)]dt \]
where $B(t)=e^{itH}Be^{-itH}$ and $\rho_{\beta,\mu}$ is the Gibbs state. Or adding the adiabatic factor $e^{-\epsilon t}$ to converges the integral,
\[ \Delta B=\lim_{\epsilon\downarrow0}i\int^\infty_0e^{-\epsilon t}\mathrm{Tr}\rho_{\beta,\mu}[A,B(t)]dt. \]
But as discussed in the introduction, even though the adiabatic factor is added, if one take the limit $\epsilon\downarrow0$, $\Delta B$ may diverse (for example the electric current in the free model). And the physical meaning of the parameter $\epsilon$ is not clear. Our formula is changed to 
\[ i\int^\infty_0e^{-2\lambda t}\omega_{\beta,\mu}\left(\left[\sum_{n\in\mathbb{Z}}na^*_na_n,\hat{j}_x(t)\right]\right)dt,\]
where $\hat{j}_x(t)=e^{itH}\hat{j}_xe^{-itH}$ ($H$ is the total Hamiltonian without perturbation). This is same as the Kubo formula with the adiabatic factor (not taking the limit $\lambda\downarrow0$) in the case $\displaystyle A=\sum_{n\in\mathbb{Z}}na^*_na_n,\ B=\hat{j}_x$. The difference between the original Kubo formula and our approach is summarized as follows:\\
{\bf original}
\begin{itemize}
\item considering the Hamiltonian dynamics of closed system
\item NESS and the convergence to it are not discussed
\item the adiabatic factor $e^{-\epsilon t}$ is artificial and the physical meaning of the parameter $\epsilon$ is not clear
\end{itemize}
{\bf ours}
\begin{itemize}
\item considering the dissipative dynamics and discussing the convergence to NESS
\item the factor $e^{-2\lambda t}$ emerges naturally from the dissipative model and the physical meaning of the parameter $\lambda$ is clear, the strengthen of the dissipation or the inverse of relaxation time.
\end{itemize}

Although in this paper we do not discuss the magnetic field, the formula can be extended easily to the case that the magnetic field is  present and it produces the TKNN formula \cite{thouless1982quantized, bellissard1994noncommutative} in the limit $\beta\to\infty,\ \lambda\downarrow0$.

\section{Solvable model}
In the previous section we derived the formula of electric conductivity for the general form of potential $V$. In fact in the case where $V=0$, one can calculate the current explicitly. Here for simplicity we restrict the discussion to one-dimensional systems. That is, the one-particle Hilbert space is $l^2(\mathbb{Z})$. The one-particle Hamiltonian for potential $V=0$ is the multiplication operator on the momentum space $L^2(-\pi,\pi)$. Precisely the one-particle Hamiltonian is given by the multiplication operator on $L^2(-\pi,\pi)$
\[ (\hat{h}\phi)(k)=(-\cos k)\phi(k),\hspace{5pt}\phi\in L^2(-\pi,\pi) \]
and the Fourier transformation $\mathcal{F}\colon l^2(\mathbb{Z})\to L^2(-\pi,\pi)$ by $h=\mathcal{F}^{-1}\hat{h}\mathcal{F}$. As discussed in section 2 the system finally converges to a unique steady quasi-free state with the operator
\[ R_{\beta,\mu}^{\lambda,E}\equiv2\lambda\int^\infty_0e^{-2\lambda s}e^{-ish_E}f_{\beta,\mu}(h)e^{ish_E}ds. \]
Here we will obtain the explicit form of $R_{\beta,\mu}^{\lambda,E}$. In the following we consider in the momentum space and identify $\mathcal{F}^{-1}h\mathcal{F}$ as $h$. Note that $Q$ is a differential operator on the momentum spaces, thus
\[ (e^{ish}\phi)(k)=e^{i\epsilon(k)s}\phi(k) \]
\[ (e^{isQ}\phi)(k)=\phi(k+s) ,\]
where $\epsilon(k)=-\cos k$. Using the product formula \cite{chernoff1968note}
\[ e^{ish_E}\phi=\lim_{n\to\infty}\left(e^{i\frac{s}{n}h}e^{-i\frac{s}{n}EQ}\right)^n\phi, \]
we obtain 
\[ (e^{ish_E}\phi)(k)=\exp\left(is\int_0^1\epsilon(k-Es\xi)d\xi\right)\phi(k-Es), \]
and 
\begin{eqnarray*}
\langle\psi,R_{\beta,\mu}^{\lambda,E}\phi\rangle&=&2\lambda\int dk\int^\infty_0ds e^{-2\lambda s}\overline{\psi(k-sE)}f_{\beta,\mu}(\epsilon(k))\phi(k-sE) \\
&=&2\lambda\int^\infty_0ds\int dk\overline{\psi(k)}e^{-2\lambda s}f_{\beta,\mu}(\epsilon(k+sE))\phi(k) \\
&=&\int dk\overline{\psi(k)}\left(2\lambda\int^\infty_0e^{-2\lambda s}f_{\beta,\mu}(\epsilon(k+sE))ds\right)\phi(k)dk.
\end{eqnarray*}
Thus $R_{\beta,\mu}^{\lambda,E}$ is the multiplication operator of the function
\[ \left(R_{\beta,\mu}^{\lambda,E}\right)(k)=2\lambda\int^\infty_0e^{-2\lambda s}f_{\beta,\mu}(\epsilon(k+sE))ds .\]
Now let us calculate the current. Note that it is independent of the site (so we denote the current by $j^\lambda(E;\beta,\mu)$) and corresponds to the integration of the velocity $\frac{d}{dk}\epsilon(k)=-\sin k$:
\begin{eqnarray*}
j^{\lambda}(E; \beta,\mu)&=&\frac{1}{2\pi}\int^\pi_{-\pi}(-\sin k)\left(2\lambda\int^\infty_0e^{-2\lambda s}f_{\beta,\mu}(\epsilon(k+sE))ds\right)dk \\
&=&\frac{1}{2\pi}\int^\pi_{-\pi}\left(2\lambda\int^\infty_0(-\sin(k-sE))e^{-2\lambda s}ds\right)f_{\beta,\mu}(\epsilon(k))dk \\
&=&\frac{2\lambda E}{4\lambda^2+E^2}\int^\pi_{-\pi}\frac{-\cos k}{1+e^{-\beta(\cos k+\mu)}}\frac{dk}{2\pi} .
\end{eqnarray*}

The current becomes $0$ in the limit $E\to0$ and $\lambda\downarrow0$ respectively. On the  other hand, the conductivity is 
\[ \sigma_{l}^\lambda(\beta,\mu)=\frac{1}{2\lambda}\int^\pi_{-\pi}\frac{-\cos k}{1+e^{\beta(-\cos k-\mu)}}\frac{dk}{2\pi} \]
and goes to infinite as $\lambda\downarrow0$. 

\section{Electric conductivity in crystals}
In this section, we consider the electrons in crystals, that is the electrons under periodic potentials. Suppose the potential $V$ has period $p_l\in\mathbb{N}$ in the direction $e_l$ ($l=1,2,\cdots,d$);
\[ V(x+p_1e_1)=V(x+p_2e_2)=\cdots=V(x+p_de_d)=V(x),\hspace{5pt}x\in\mathbb{Z}^d .\]
Take $\Lambda=\{m_1e_1+\cdots m_de_d \mid 0\le m_1<p_1,\cdots0\le m_d< p_d\}$ and $\mathcal{B}=\mathbb{R}^d/(p_1\mathbb{Z}\times\cdots\times p_d\mathbb{Z})=\left(-\frac{\pi}{p_1},\frac{\pi}{p_1}\right]\times\cdots\left(-\frac{\pi}{p_d},\frac{\pi}{p_d}\right]$. $\mathcal{B}$ is called the Brillouin zone. For periodic potentials we can use the Bloch theory. The Hilbert space $l^2(\mathbb{Z}^d)$ is decomposed as the following direct integral
\[ l^2(\mathbb{Z}^d)=\int_\mathcal{B}^\oplus\mathcal{H}_kdk, \]
where $\mathcal{H}_k=l^2(\Lambda)\simeq\mathbb{C}^{|\Lambda|}$. And by the periodicity of $h$,  this is decomposed as
\[ h=\int_\mathcal{B}^\oplus h_kdk, \]
where $h_k$ is an operator on $l^2(\Lambda)$ defined as 
\[ (h_k\phi)(x)=-\sum_{|x-y|=1}\phi(y)+V(x)\phi(x),\hspace{5pt}x\in\Lambda \]
with boundary conditions $\phi(x+p_je_j)=e^{ikp_j}\phi(x)$. By this decomposition we also have
\[ f_{\beta,\mu}(h)=\int_\mathcal{B}^\oplus f_{\beta,\mu}(h_k)dk. \]
Since the commutator with the position operator $Q_1$ has the same periodicity, it is decomposed and it is given by the derivative:
\[ i[Q_1,f_{\beta,\mu}(h)]=\int_\mathcal{B}^\oplus \partial_{k_1}f_{\beta,\mu}(h_k)dk. \]

The velocity $v_1$ is also decomposed as 
\[ v_1=i[h,Q_1]=\int_\mathcal{B}^\oplus v_{1,k}dk,\]
where $v_{1,k}=-\partial_{k_1}h_k$.

Consider the mean of the conductivity
\[ \sigma_1^\lambda(\beta,\mu)=\frac{1}{|\Lambda|}\sum_{x\in\Lambda}\sigma^\lambda_{1,x}(\beta,\mu), \]
then this is expressed as 
\[ \sigma^\lambda_1(\beta,\mu)=\mathrm{Re}\int^\infty_0e^{-2\lambda s}\left(\int_\mathcal{B}\mathrm{Tr}e^{-ish_k}\partial_{k_1}f_{\beta,\mu}(h_k)e^{ish_k}v_{1,k}dk\right)ds. \]

In the following we consider the low temperature and small dissipation regime. Here we impose some assumptions.

Let $\epsilon_k^n$ and $\psi_k^n$ be eigenvalues and eigenvectors of $h_k$ ($n=1,2,\cdots,|\Lambda|$). Suppose
\begin{itemize}
\item $h_k$ is nondegenerate for all $k\in\mathcal{B}$
\item the eigenvalues $\epsilon_k^n$ and eigenvectors $\psi_k^n$ of $h_k$ are in $C^2$-class.
\end{itemize}

\begin{thm}
Under the above assumptions, we have
\[ \sigma_1^\lambda(\beta,\mu)=\frac{1}{2\lambda}\sum_{n=1}^{|\Lambda|}\int_\mathcal{B}f_{\beta,\mu}(\epsilon_k^n)\partial^2_{k_1}\epsilon_k^ndk+O(\lambda) .\]
Especially in the low temperature limit $\beta\to\infty$, electric conductivity $\sigma^\lambda_1(\infty,\mu)$ is expressed as
\begin{eqnarray*}
\sigma^\lambda_1(\infty,\mu)&=&\frac{1}{2\lambda}\sum_{n=1}^{|\Lambda|}\int_{\partial\mathcal{B}_\mu^n} \langle\psi_k^n ,v_{1,k}\psi_k^n\rangle n_1(k)dk+O(\lambda) \\
&=&\frac{1}{2\lambda}\sum_{n=1}^{|\Lambda|}\int_{\partial\mathcal{B}_\mu^n} \partial_{k_1}\epsilon_k^n n_1(k)dk+O(\lambda)
\end{eqnarray*}
where $\mathcal{B}_\mu^n=\{k\in\mathcal{B}\mid\epsilon_k^n\le\mu \}$ and $\partial\mathcal{B}_\mu^n$ is the boundary of $\mathcal{B}_\mu^n$ and $n_l(k)$ is the $l$-th element of the unit normal vector at $k\in\partial\mathcal{B}_\mu^n$. If $\mu$ is in the band gap, $\mathcal{B}_\mu^n$ has no boundary and the above integral is $0$. 
\end{thm}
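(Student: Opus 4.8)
The plan is to diagonalize the fiber Hamiltonians $h_k$ and reduce the expression for $\sigma_1^\lambda(\beta,\mu)$ displayed just above to elementary scalar integrals. Writing $h_k\psi_k^n=\epsilon_k^n\psi_k^n$, so that $e^{\pm ish_k}\psi_k^n=e^{\pm is\epsilon_k^n}\psi_k^n$, I would expand the trace in the orthonormal basis $\{\psi_k^n\}_{n=1}^{|\Lambda|}$,
\[ \mathrm{Tr}\,e^{-ish_k}\partial_{k_1}f_{\beta,\mu}(h_k)e^{ish_k}v_{1,k}=\sum_{n,m}e^{is(\epsilon_k^m-\epsilon_k^n)}\langle\psi_k^n,(\partial_{k_1}f_{\beta,\mu}(h_k))\psi_k^m\rangle\langle\psi_k^m,v_{1,k}\psi_k^n\rangle, \]
and record the consequences of differentiating the eigenvalue equation: pairing $\partial_{k_1}(h_k\psi_k^m)=\partial_{k_1}(\epsilon_k^m\psi_k^m)$ with $\psi_k^n$ gives, for $n\neq m$, $\langle\psi_k^n,(\partial_{k_1}h_k)\psi_k^m\rangle=(\epsilon_k^m-\epsilon_k^n)\langle\psi_k^n,\partial_{k_1}\psi_k^m\rangle$ and, on the diagonal, the Hellmann--Feynman identity $\langle\psi_k^n,(\partial_{k_1}h_k)\psi_k^n\rangle=\partial_{k_1}\epsilon_k^n$; the same computation with $f_{\beta,\mu}(h_k)$ gives $\langle\psi_k^n,(\partial_{k_1}f_{\beta,\mu}(h_k))\psi_k^m\rangle=(f_{\beta,\mu}(\epsilon_k^m)-f_{\beta,\mu}(\epsilon_k^n))\langle\psi_k^n,\partial_{k_1}\psi_k^m\rangle$ for $n\neq m$ and $f'_{\beta,\mu}(\epsilon_k^n)\,\partial_{k_1}\epsilon_k^n$ for $n=m$ (the contribution of the $k$-dependence of the eigenprojections cancels on the diagonal because $\partial_{k_1}\langle\psi_k^n,\psi_k^n\rangle=0$). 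Using $v_{1,k}=-\partial_{k_1}h_k$ and $\langle\psi_k^m,\partial_{k_1}\psi_k^n\rangle=-\overline{\langle\psi_k^n,\partial_{k_1}\psi_k^m\rangle}$, the trace splits into a diagonal part $-\sum_n f'_{\beta,\mu}(\epsilon_k^n)(\partial_{k_1}\epsilon_k^n)^2$, which is independent of $s$, minus the off-diagonal part $\sum_{n\neq m}e^{is(\epsilon_k^m-\epsilon_k^n)}(f_{\beta,\mu}(\epsilon_k^m)-f_{\beta,\mu}(\epsilon_k^n))(\epsilon_k^m-\epsilon_k^n)\,|\langle\psi_k^n,\partial_{k_1}\psi_k^m\rangle|^2$.

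Next I would carry out the $s$-integral via $\int_0^\infty e^{-2\lambda s}e^{is\Delta}\,ds=(2\lambda-i\Delta)^{-1}$, whose real part is $2\lambda/(4\lambda^2+\Delta^2)$. The diagonal part produces $-\frac{1}{2\lambda}\int_{\mathcal B}\sum_n f'_{\beta,\mu}(\epsilon_k^n)(\partial_{k_1}\epsilon_k^n)^2\,dk$; since $f'_{\beta,\mu}(\epsilon_k^n)(\partial_{k_1}\epsilon_k^n)^2=(\partial_{k_1}f_{\beta,\mu}(\epsilon_k^n))\,\partial_{k_1}\epsilon_k^n$ and $\mathcal B$ is a torus (no boundary), integration by parts in $k_1$ turns this into $+\frac{1}{2\lambda}\sum_n\int_{\mathcal B}f_{\beta,\mu}(\epsilon_k^n)\,\partial_{k_1}^2\epsilon_k^n\,dk$, the claimed leading term; here the $C^2$ hypothesis is exactly what makes $\partial_{k_1}^2\epsilon_k^n$ available. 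For the off-diagonal part, nondegeneracy of $h_k$ together with continuity and compactness of $\mathcal B$ yields a uniform gap $|\epsilon_k^m-\epsilon_k^n|\ge c>0$ for all $k$ and $n\neq m$; since $\sup_{|\Delta|\ge c}2\lambda|\Delta|/(4\lambda^2+\Delta^2)\le 2\lambda/c$ once $\lambda\le c/2$, while $|f_{\beta,\mu}(\epsilon_k^m)-f_{\beta,\mu}(\epsilon_k^n)|\le1$ and $\langle\psi_k^n,\partial_{k_1}\psi_k^m\rangle$ is bounded, the off-diagonal contribution to $\sigma_1^\lambda$ is bounded by $|\Lambda|^2|\mathcal B|\cdot(2\lambda/c)\cdot(\mathrm{const})=O(\lambda)$, with a constant uniform in $\beta$. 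This establishes the first displayed formula.

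For the low-temperature statement, $f_{\beta,\mu}\to\mathbf 1_{(-\infty,\mu)}$ pointwise off the set $\{\epsilon_k^n=\mu\}$, which has measure zero when $\mu$ is a regular value, so dominated convergence replaces $f_{\beta,\mu}(\epsilon_k^n)$ by the indicator of $\mathcal B_\mu^n$ in the leading term while the $O(\lambda)$ error stays uniform in $\beta$. Applying the divergence theorem to the vector field $(\partial_{k_1}\epsilon_k^n,0,\dots,0)$ on $\mathcal B_\mu^n\subset\mathcal B$ converts $\int_{\mathcal B_\mu^n}\partial_{k_1}^2\epsilon_k^n\,dk$ into the Fermi-surface integral $\int_{\partial\mathcal B_\mu^n}\partial_{k_1}\epsilon_k^n\,n_1(k)\,dk$, and Hellmann--Feynman rewrites $\partial_{k_1}\epsilon_k^n$ as the band velocity $\langle\psi_k^n,v_{1,k}\psi_k^n\rangle$ (up to the orientation convention for the unit normal $n_1$); if $\mu$ lies in a band gap then $\mathcal B_\mu^n$ is either empty or all of $\mathcal B$, which is a closed manifold, so $\partial\mathcal B_\mu^n=\emptyset$ and that band contributes nothing.

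I expect the heart of the matter to be the vanishing of the off-diagonal (interband) contribution as $\lambda\downarrow0$: this is precisely where nondegeneracy is indispensable, since without a uniform spectral gap the factor $2\lambda|\Delta|/(4\lambda^2+\Delta^2)$ is of order $1$ when $|\Delta|\sim\lambda$, and the interband terms would survive. A secondary point needing care is the divergence-theorem step, which requires $\partial\mathcal B_\mu^n=\{\epsilon_k^n=\mu\}$ to be a piecewise-$C^1$ hypersurface, i.e. $\mu$ a regular value of $\epsilon^n_\cdot$ (true for almost every $\mu$ by Sard's theorem, and automatic when $\mu$ is in a gap). One should also justify exchanging the $s$-integral, the $k$-integral, and the $\beta\to\infty$ limit, which follows from the finiteness of the band sum, compactness of $\mathcal B$, and the uniform bounds above.
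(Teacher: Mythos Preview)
Your argument is correct and follows essentially the same route as the paper: expand the fiber trace in the eigenbasis of $h_k$, split into diagonal and off-diagonal pieces, perform the $s$-integral, bound the off-diagonal (interband) contribution by $O(\lambda)$ via the uniform spectral gap, integrate the diagonal term by parts over the torus, and finally pass to $\beta\to\infty$ and apply the divergence theorem. The only cosmetic difference is that the paper works with the rank-one projections $P_k^n=|\psi_k^n\rangle\langle\psi_k^n|$ and their derivatives (using identities such as $P^n(\partial P^m)P^n=0$), whereas you work directly with eigenvectors and Hellmann--Feynman matrix elements; for $n\neq m$ one has $\mathrm{Tr}\,P_k^n(\partial_{k_1}P_k^m)^2=|\langle\psi_k^n,\partial_{k_1}\psi_k^m\rangle|^2$, so the two bookkeeping schemes are equivalent under the stated $C^2$ hypothesis on eigenvectors. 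Your treatment is in fact slightly more explicit than the paper's about the side conditions (regularity of $\partial\mathcal B_\mu^n$, dominated convergence for $\beta\to\infty$, uniformity of the $O(\lambda)$ bound in $\beta$).
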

This formula means that the main contribution to the conductivity in low temperature regime is given by integral of the velocity over the Fermi surface. And if $\mu$ is in the band gap, the conductivity is almost $0$ (insulator).
\begin{proof}
Since $\mathrm{Tr}e^{-ish_k}\partial_{k_1}f_{\beta,\mu}(h_k)e^{ish_k}\partial_{k_1}h_k$ is bounded for $k$, by the Fubini theorem we have
\[ \sigma^\lambda_1(\beta,\mu)=\mathrm{Re}\int_\mathcal{B}\left(\int^\infty_0e^{-2\lambda s}\mathrm{Tr}e^{-ish_k}\partial_{k_1}f_{\beta,\mu}(h_k)e^{ish_k}\partial_{k_1}h_kds\right)dk. \]
Put $P_k^n=|\psi_k^n\rangle\langle\psi_k^n|$. As $h_k$ is nondegenerate, we get
\begin{eqnarray*}
&&\int^\infty_0e^{-2\lambda s}\mathrm{Tr}e^{-ish_k}\partial_{k_1}f_{\beta,\mu}(h_k)e^{ish_k}\partial_{k_1}h_kds \\
&=&\sum_{n=1}^{|\Lambda|}\sum_{m=1}^{|\Lambda|}\int^\infty_0e^{-(2\lambda+i\epsilon_k^n-i\epsilon_k^m)s}ds\mathrm{Tr}P_k^n\partial_{k_1}f_{\beta,\mu}(h_k)P_k^m\partial_{k_1}h_k \\
&=&\frac{1}{2\lambda}\sum_{n=1}^{|\Lambda|}\mathrm{Tr}P_k^n\partial_{k_1}f_{\beta,\mu}(h_k)P_k^n\partial_{k_1}h_k \\
&&+\sum_{n\neq m}\frac{1}{2\lambda+i(\epsilon_k^n-\epsilon_k^m)}\mathrm{Tr}P_k^n\partial_{k_1}f_{\beta,\mu}(h_k)P_k^m\partial_{k_1}h_k.
\end{eqnarray*}
We estimate the right hand side using the following equations. In the sequel we skip the index $k$ and write $\partial_{k_1},\epsilon_k^n$ as $\partial,\epsilon^n$ for short. 

By differentiating $P^n=P^nP^n$, we have $\partial P^n=(\partial P^n)P^n+P^n\partial P^n$. Multiplying $P^n$ from both side, $P^n(\partial P^n)P^n=0$. Since $P^nP^m=0$ for $n\neq m$, we have $(\partial P^n)P^m+P^n\partial P^m=0$. Multiplying $P^n$ from both side, $P^n(\partial P^m)P^n=0$. From these equations we obtain
\[ \sum_{n=1}^{|\Lambda|}\mathrm{Tr}P^n\partial f_{\beta,\mu}(h)P^n\partial h=\sum_{n=1}^{|\Lambda|}\partial f_{\beta,\mu}(\epsilon^n)\partial\epsilon^n. \]
And if $n\neq m$ we have
\begin{eqnarray*}
P^n\partial f_{\beta,\mu}(h)P^m&=&\sum_{l=1}^{|\Lambda|}f_{\beta,\mu}(\epsilon^l)P^n(\partial P^l)P^m \\
&=&f_{\beta,\mu}(\epsilon^m)P^n(\partial P^m)P^m+f_{\beta,\mu}(\epsilon^n)P^n(\partial P^n)P^m \\
&=&f_{\beta,\mu}(\epsilon^m)P^n\partial P^m+f_{\beta,\mu}(\epsilon^n)(\partial P^n)P^m.
\end{eqnarray*}
In addition, we obtain $P^m\partial hP^n=\epsilon^m(\partial P^m)P^n+\epsilon^nP^m\partial P^n$ similarly. Therefore we have
\begin{eqnarray*}
\mathrm{Tr}P^n\partial f_{\beta,\mu}(h)P^m\partial h&=&f_{\beta,\mu}(\epsilon^m)\epsilon^m\mathrm{Tr}P^n(\partial P^m)^2+f_{\beta,\mu}(\epsilon^m)\epsilon^n\mathrm{Tr}P^n(\partial P^m)P^m\partial P^n \\
&&+f_{\beta,\mu}(\epsilon^n)\epsilon^m\mathrm{Tr}(\partial P^n)P^m(\partial P^m)P^n+f_{\beta,\mu}(\epsilon^n)\epsilon^n\mathrm{Tr}P^m(\partial P^n)^2\\
&=&f_{\beta,\mu}(\epsilon^m)(\epsilon^m-\epsilon^n)\mathrm{Tr}P^n(\partial P^m)^2+f_{\beta,\mu}(\epsilon^n)(\epsilon^n-\epsilon^m)\mathrm{Tr}P^m(\partial P^n)^2.
\end{eqnarray*}
Using the above equations 
\[ \sum_{n\neq m}\frac{1}{2\lambda+i(\epsilon_k^n-\epsilon_k^m)}\mathrm{Tr}P_k^n\partial_{k_1}f_{\beta,\mu}(h_k)P_k^m\partial_{k_1}h_k=\sum_{n\neq m}\frac{4\lambda(\epsilon_k^n-\epsilon_k^m)}{4\lambda^2+(\epsilon_k^m-\epsilon_k^n)^2}f_{\beta,\mu}(\epsilon_k^m)\mathrm{Tr}P_k^n(\partial_{k_1} P_k^m)^2. \]
Since $\epsilon_k^n$ is continuous for $k$ and $\epsilon_k^n\neq\epsilon_k^m$ $(n\neq m)$, there is a positive constant $C$ such that for all $k\in\mathcal{B}$ and $n\neq m$, $|\epsilon_k^n-\epsilon_k^m|>C$. Therefore we obtain the $\mu$-independent upper bound
\[ \left|\mathrm{Re}\int_\mathcal{B}\sum_{n\neq m}\frac{1}{2\lambda+i(\epsilon_k^n-\epsilon_k^m)}\mathrm{Tr}P_k^n\partial_{k_1}f_{\beta,\mu}(h_k)P_k^m\partial_{k_1}h_k \right|\le\frac{4\lambda}{C}\sum_{n\neq m}\int_\mathcal{B}\mathrm{Tr}P_k^n(\partial_{k_1} P_k^m)^2 dk. \]

\[ \sigma_1^\lambda(\beta,\mu)=-\frac{1}{2\lambda}\sum_{n=1}^{|\Lambda|}\int_\mathcal{B}\partial_{k_1}f_{\beta,\mu}(\epsilon_k^n)\partial_{k_1}\epsilon_k^ndk+O(\lambda)=\frac{1}{2\lambda}\sum_{n=1}^{|\Lambda|}\int_\mathcal{B}f_{\beta,\mu}(\epsilon_k^n)\partial^2_{k_1}\epsilon_k^ndk+O(\lambda). \]
Put $\mathcal{B}^n_\mu=\{k\in\mathcal{B}\mid \epsilon_k^n\le\mu\}$, then 
\begin{eqnarray*}
\sigma_1^\lambda(\infty,\mu)=\lim_{\beta\to\infty}\sigma_1^\lambda(\beta,\mu)&=&\frac{1}{2\lambda}\sum_{n=1}^{|\Lambda|}\int_{\mathcal{B}^n_\mu}\partial^2_{k_1}\epsilon_k^ndk+O(\lambda), \\
&=&\frac{1}{2\lambda}\sum_{n=1}^{|\Lambda|}\int_{\partial\mathcal{B}^n_\mu}\partial_{k_1}\epsilon_k^nn_1(k)dk+O(\lambda).
\end{eqnarray*}

\end{proof}

In one-dimensional case ($d=1$), the absence of degeneracy of eigenvalues $\epsilon_k^n$ implies the fact that all the gaps are open. Thus, $\mu$ is either in the gap or in the only one band ($n$-th band). Furthermore $\epsilon_k^n$ is an even function and monotonically increasing or decreasing on $0\le k\le \frac{\pi}{p}$. Denote $k_\mu\in[0,\frac{\pi}{p}]$ the solution of $\epsilon_k^n=\mu$. The conductivity is as follows in each case
\[ \sigma^\lambda(\infty,\mu)=\frac{1}{\lambda}\left.\partial_k\epsilon_k^n\right|_{k=k_\mu}+O(\lambda)\ \left(\epsilon_k^n\mathrm{\ is\ increasing\ on\ }0\le k\le \frac{\pi}{p}\right) \]

\[ \sigma^\lambda(\infty,\mu)=-\frac{1}{\lambda}\left.\partial_k\epsilon_k^n\right|_{k=k_\mu}+O(\lambda)\ \left(\epsilon_k^n\mathrm{\ is\ decreasing\ on\ }0\le k\le \frac{\pi}{p}\right). \]

\section{Discussion}
In this paper we considered the conductivity of non-interacting lattice fermion system coupled to a heat reservoir. We dealt with this open system by the quantum dynamical semigroup discussed by Davies. The main result is the formula in Theorem 2.1 and the justification of an adiabatic factor, $e^{-2\lambda t}$, of Kubo formula. In our approach the parameter $\lambda$ is naturally emerged by considering the dissipative dynamics and has the physical meaning, the strengthen of the dissipation or the inverse of relaxation time.  

In the case of periodic potentials, we showed that the conductivity is given by the integral of the velocity over the Fermi surface in the low temperature and small dissipation limit.

\appendix
\section{Appendix: continuous model and Drude formula}
In this paper we discussed lattice models. One can also consider the dissipative dynamics introduced in section 2 on continuous models. In this appendix we discuss the free continuous model and show that Drude formula is obtained from this model. 

For simplicity, here we consider a one-dimensional system as in section 3. In free continuous model, the one-particle Hilbert space is $L^2(\mathbb{R})$ and the one-particle Hamiltonian is the Fourier transformation of the multiplication operator: 

\[ (\hat{h}\phi)(k)=\frac{k^2}{2}\phi(k),\hspace{5pt}\phi\in L^2(\mathbb{R}) \]

The domain of $h$ is $\mathcal{D}(h)=\{\phi\in L^2(\mathbb{R})\mid\int_\mathbb{R}k^4|(\mathcal{F}\phi)(k)|^2dk<\infty\}$. In the continuous model, both $Q$ and $h$ are unbounded and we have to consider the domain of $h-EQ$ carefully. The operator $h-EQ$ defined on $\mathcal{D}(h-EQ)=\mathcal{D}(h)\cap\mathcal{D}(Q)$ as
\[ (h-EQ)\phi=h\phi-EQ\phi,\hspace{5pt}\phi\in\mathcal{D}(h-EQ), \]
is essentially self-adjoint, that is, the closure of it (we denote $h_E$) is a self-adjoint operator.

As in section 2, one can introduce the dissipative dynamics. The NESS is a quasi-free state generated by the following operator on $L^2(\mathbb{R})$:
\[ 2\lambda\int^\infty_0e^{-2\lambda s}e^{-ish_E}f_{\beta,\mu}(h)e^{ish_E}ds. \]
As the calculation in section 3 shows, this operator is the multiplication operator on momentum space of the function:
\[ \left(R_{\beta,\mu}^{\lambda,E}\right)(k)=2\lambda\int^\infty_0e^{-2\lambda s}f_{\beta,\mu}(\epsilon(k+sE))ds. \]

Here we calculate the current density as the integral of the momentum $k$ (this corresponds to our definition of current in lattice models ).

\begin{eqnarray*}
j^{\lambda}(E; \beta,\mu)&=&\int_\mathbb{R}k\left(2\lambda\int^\infty_0e^{-2\lambda s}f_{\beta,\mu}(\epsilon(k+sE))ds\right)dk \\
&=&\int_\mathbb{R}\left(2\lambda\int^\infty_0(k-sE)e^{-2\lambda s}ds\right)f_{\beta,\mu}(\epsilon(k))dk \\
&=&\frac{E}{2\lambda}\int_\mathbb{R}\frac{1}{1+e^{\beta(k^2-\mu)}}dk.
\end{eqnarray*}
$\int_\mathbb{R}\frac{1}{1+e^{\beta(k^2-\mu)}}dk$ is the density of particles per unit volume, and we denote it by $\rho$. Writing the mass $m$ and the charge $q$ explicitly, the last equation becomes 
\[ j^{\lambda}(E; \beta,\mu)=\frac{1}{2\lambda}\frac{\rho q^2}{m}E .\]
Noting that $(2\lambda)^{-1}$ means the relaxation time, this is the same as the formula known as the Drude formula.

\end{document}